\theoremstyle{plain}
\newtheorem{theorem}{Theorem}
\newtheorem{lemma}{Lemma}
\theoremstyle{definition}
\newtheorem{definition}{Definition}
\theoremstyle{remark}
\begin{document}
\title{Approximating the Regular Graphic TSP in near Linear Time}
\author{Ashish Chiplunkar \qquad Sundar Vishwanathan\\
Department of Computer Science and Engineering\\
Indian Institute of Technology Bombay\\
Mumbai India\\
\texttt{\{ashishc, sundar\}@cse.iitb.ac.in}
}
\date{}
\maketitle

\begin{abstract}
We present a randomized approximation algorithm for computing traveling salesperson tours in undirected regular graphs. Given an $n$-vertex, $k$-regular graph, the algorithm computes a tour of length at most $\left(1+\frac{7}{\ln k-O(1)}\right)n$, with high probability, in $O(nk \log k)$ time. This improves upon a recent result by Vishnoi (\cite{Vishnoi12}, FOCS 2012) for the same problem, in terms of both approximation factor, and running time. The key ingredient of our algorithm is a technique that uses edge-coloring algorithms  to sample a cycle cover with $O(n/\log k)$ cycles with high probability, in near linear time.

Additionally, we also give a deterministic $\frac{3}{2}+O\left(\frac{1}{\sqrt{k}}\right)$ factor approximation algorithm running in time $O(nk)$.
\end{abstract}

\section{Introduction}

Given a complete undirected graph with  positive real valued weights on the edges, the traveling salesperson problem ({\sc{TSP}}) is to find a minimum weight cycle that visits each vertex exactly once.
This problem was among the first few proved NP-Complete by Karp \cite{Karp72}.
In  the nearly four decades since this result was published, the {\sc{TSP}} has been a fundamental research problem in both complexity theory, and algorithmic graph theory. 
In the absence of any structural restriction on the weight function, the {\sc{TSP}} is hard to approximate within any constant factor (\cite{Trevisan04}, \cite{SahniG76}). 
The most widely researched restriction of the TSP is the {\sc{MetricTSP}}, where the vertices form a metric space with the weight function as the metric. This simple imposition of the triangle inequality over the weights allowed Christofides \cite{Christofides76} to efficiently construct tours with an approximation ratio of $3/2$. No progress has been made on this upper bound in the last 35 years.

The lower bound on the approximation factor has undergone steady improvement. The first explicitly proven lower bound was $5381/5380$ by Engebretsen \cite{Engebretsen03} (for the {\sc{MetricTSP}} with distances 1, and 2), followed by $3813/3812$ by B\"{o}ckenhauer and Seibert \cite{BockenhauerS00} (for the {\sc{MetricTSP}} with distances 1, 2, and 3).
Consequently, Papadimitriou and Vempala \cite{PapadimitriouV06} demonstrated the NP-hardness of approximating within a factor better than $220/219$. This bound was later improved by Lampis \cite{Lampis12} to $185/184$. The current lower bound is $123/122$, given by Karpinski, Lampis, and Schmied \cite{KarpinskiLS13}. We refer the reader to \cite{KarpinskiLS13} for a nice overview of recent advances in many natural restrictions of the {\sc{MetricTSP}}. 

Returning to the upper bound, the existence of a PTAS was precluded early on by the proof of APX-hardness of the {\sc{MetricTSP}} given by Papadimitriou and Yannakakis \cite{PapadimitriouY93}. Arora \cite{Arora98} and Mitchell \cite{Mitchell99} constructed a PTAS for the case when the metric is Euclidean (the {\sc{EuclideanTSP}}). Interestingly, Christofides \cite{Christofides76} proposed his algorithm for the {\sc{EuclideanTSP}}, however, his proof only relied on the triangle inequality property of the Euclidean metric, making it valid for all metrics. 
The existence of a PTAS clearly separates the {\sc{EuclideanTSP}} from the {\sc{MetricTSP}} in terms of computational hardness.

An important sub-class of the {\sc{MetricTSP}} is the {\sc{GraphTSP}}, where the weight function on the edges arises from the shortest path distances in some unweighted undirected graph. This is believed to be the most promising candidate for capturing the computational hardness of the {\sc{MetricTSP}}. {\sc{GraphTSP}} is APX-hard (as a consequence of its MAX-SNP hardness \cite{PapadimitriouY93} and the PCP theorem \cite{AroraLMSS98}), and the best known lower bound of $4/3$ on the integrality gap of the Held-Karp LP relaxation \cite{HeldKarp71} of the {\sc{MetricTSP}} is observed on an instance of the {\sc{GraphTSP}}. Gharan, Saberi and Singh \cite{GharanSS11} achieved the first improvement over Christofides \cite{Christofides76} algorithm for the {\sc{GraphTSP}} with an approximation ratio strictly less then $3/2$, which was shortly followed by M\"{o}mke and Svensson's \cite{MomkeS11} bound of $1.461$. Mucha \cite{Mucha12} later improved the analysis of M\"{o}mke and Svensson's \cite{MomkeS11} algorithm and demonstrated a bound of $13/9$. Currently, the best known bound is $7/5$, given by Seb\"{o} and Vygen \cite{SeboV12}. It is widely believed that the Held-Karp relaxation has an integrality gap of precisely $4/3$, and this has been proven for cubic graphs \cite{BoydSSS14}.

Vishnoi \cite{Vishnoi12} opened up a new line of interesting work by arguing that the Held-Karp relaxations of the {\sc{GraphTSP}} possibly get better with increasing edge density. Alluding to Dirac's theorem \cite{Dirac52} assuring existence of a Hamiltonian cycle in graphs with min-degree greater than $n/2$, Karp's result \cite{Karp76} on random graph models containing efficiently constructible Hamiltonian cycles, and his own treatment of the TSP on k-regular graphs ({\sc{RegGraphTSP}}), Vishnoi puts forth a view, that the hardness of the {\sc{GraphTSP}} might indeed be concentrated among instances of low-degree. We refer the reader to Vishnoi \cite{Vishnoi12} for a nice survey on the {\sc{MetricTSP}} in general, and an interesting discussion on this line of work.

The main technical contribution of Vishnoi's paper is a novel method which efficiently constructs arbitrarily good tours for the {\sc{RegGraphTSP}}. The algorithm first constructs a cycle cover with $O(n/\sqrt{\ln k})$ cycles, using Jerrum, Sinclair and Vigoda's algorithm \cite{JerrumSV04} for sampling a perfect matching from the bipartite version of the input graph. These cycles are then connected using two copies of a spanning tree on the graph formed by contracting the cycles. This yields a  tour of length at most  $(1+\sqrt{64/\ln k})n$ with probability $1-1/n$. The running time of this algorithm is dictated by the sampling method, which is around $O(n^{10}\log^3n)$.  This can be improved marginally by using a faster sampling algorithm, for example, the algorithm by Bez{\'a}kov{\'a}, Stefankovic, Vazirani and Vigoda \cite{BezakovaSVV08}. In this paper, we propose an alternative method, which solves the {\sc{RegGraphTSP}} with a better approximation factor, and in near linear time.

\begin{theorem}\label{thm_main}
There is an algorithm which, given a connected $k$-regular undirected graph on $n$ vertices, runs in time $O(nk\log k)$, and outputs a TSP tour of cost at most $\left(1+\frac{7}{\ln k-O(1)}\right)n$ with high probability (specifically, probability of failure decaying exponentially with $n$).
\end{theorem}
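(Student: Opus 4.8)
The plan is to reduce the tour-construction problem to building a low-cardinality cycle cover, and then to realize such a cover quickly and with exponentially small failure probability by sampling from an edge-colouring. Recall first that, on the shortest-path metric, a TSP tour of cost at most $M$ exists whenever the graph contains a connected spanning Eulerian sub-multigraph with at most $M$ edges: one takes an Euler circuit and shortcuts repeated vertices, which cannot increase the cost since the weights are graph distances. I will therefore aim to produce such a sub-multigraph with few edges. The natural candidate is a \emph{cycle cover} (a spanning $2$-regular subgraph), which uses exactly $n$ edges and makes every degree even, but is in general disconnected. If the cover has $c$ cycles, I contract each cycle, take a spanning tree of the contracted multigraph (the graph is connected, so this tree has $c-1$ edges), and add two parallel copies of each tree edge. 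The result is connected and Eulerian, with edge count $n + 2(c-1)$, so the tour cost is at most $n + 2(c-1)$. Everything then hinges on guaranteeing $c = O(n/\ln k)$; matching the claimed factor amounts to producing a cover with $c \le \tfrac{7}{2}\cdot\tfrac{n}{\ln k - O(1)}$ cycles.

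To build the cover I would pass to a bipartite encoding whose perfect matchings are exactly the cycle covers of $G$. Assuming first that $k$ is even, fix an Eulerian orientation (in-degree $=$ out-degree $= k/2$ at every vertex) and form the bipartite graph $H$ on two copies of $V$, joining $u_{\mathrm{out}}$ to $v_{\mathrm{in}}$ whenever $u \to v$ is an arc; then $H$ is $(k/2)$-regular, and a perfect matching of $H$ selects one out-arc and one in-arc at each vertex, i.e.\ a cycle cover of $G$. By König's theorem $H$ is properly edge-colourable with $k/2$ colours, this colouring is computable in $O(nk\log k)$ time, and each colour class is itself a cycle cover. Odd $k$ is handled by first peeling off a single (near-)perfect matching, leaving an even-regular graph to which the above applies. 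The cover is then \emph{sampled} by injecting randomness into this pipeline — a random Eulerian orientation together with randomised choices inside the König recursion — so that the resulting cover behaves like a random member of a $2$-factorisation rather than an adversarial one.

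The heart of the argument is to bound the number of cycles $c$ of the sampled cover. Splitting by length at a threshold $L$, cycles of length at least $L$ number at most $n/L$, so taking $L = \Theta(\ln k)$ already contributes the desired $O(n/\ln k)$; the work is to show that short cycles contribute no more. For this I would bound, for each potential cycle $C$ of length $\ell < L$, the probability that the randomised colouring realises $C$ as a cycle of the cover, and sum over $C$ by linearity of expectation. I expect this appearance probability to decay geometrically in $\ell$ — heuristically each of the $\ell$ arcs of $C$ must be selected and co-coloured, an event of probability $O(1/k)$ per arc — so that, weighed against the at most $O(nk^{\ell-1})$ cycles of length $\ell$, the expected number of short cycles stays $O(n/\ln k)$. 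Tuning the threshold $L$ and tracking the constant in the per-arc probability is precisely what yields the additive $-O(1)$ in the denominator $\ln k - O(1)$ and the constant $7$. \textbf{This probability estimate is the main obstacle}: a proper edge colouring is far from a product distribution, so the events ``arc $e$ receives colour $i$'' are correlated, and the randomisation of the orientation and of the colouring recursion must be arranged carefully enough to tame these correlations while preserving the geometric decay.

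Finally I would upgrade the expectation bound to a high-probability statement and account for the running time. For concentration I would express $c$ as a function of the independent random choices made by the sampler and argue that altering one choice perturbs the cycle structure only locally, changing $c$ by a bounded amount; a bounded-difference (Azuma/McDiarmid) inequality then gives deviation probability decaying exponentially in $n$, which is the failure bound claimed in the theorem. The time budget is met term by term: the Eulerian orientation and Euler circuit cost $O(nk)$, the König edge colouring of the $(k/2)$-regular bipartite graph $H$ costs $O(nk\log k)$, and contracting the cycles, computing a spanning tree, doubling it, and extracting and shortcutting the final Euler circuit are all $O(nk)$. Summing gives the stated $O(nk\log k)$ overall.
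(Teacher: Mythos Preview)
Your high-level reduction (cycle cover $\Rightarrow$ spanning tree $\Rightarrow$ Euler tour $\Rightarrow$ shortcut) and running-time accounting match the paper, but the core probabilistic analysis you sketch is not the one that actually goes through, and the obstacle you flag is real and not overcome by the method you propose.

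You plan to control the number of cycles of a \emph{single} sampled cover by a short/long split, bounding the appearance probability of each short cycle geometrically and then applying a bounded-difference inequality for concentration. Both steps are problematic. First, the ``$O(1/k)$ per arc'' heuristic fails for a randomised K\"onig/Euler colouring: the colour assignments along a cycle are strongly correlated (one random choice at a vertex split, or one swap along an alternating walk, reroutes many arcs at once), and we know of no way to extract the needed geometric decay from this process. Second, for the same reason the bounded-difference hypothesis is doubtful: flipping a single random bit in the recursion (e.g.\ which half of the arcs at a vertex goes to $v_0$) can change the cycle structure of the output globally, so the Lipschitz constant for $c$ is not obviously $O(1)$, and McDiarmid gives nothing.

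The paper sidesteps both issues with a different mechanism. It does not analyse one cover and concentrate; it samples an entire cycle-cover \emph{colouring} (so $k$ covers at once) and returns the best one. Failure then means \emph{all} $k$ covers are bad. The analysis consists of two counting lemmas: (i) a uniform upper bound $f(n,k)=\bigl(k^k/(k!)^2\bigr)^n\cdot 2^{-(k-1)}$ on the probability that the randomised recursion outputs any \emph{fixed} colouring, and (ii) an upper bound $\binom{n}{r}k^{n-r}$ on the number of cycle covers with exactly $r$ components. A union bound over all ``bad'' colourings (product of the $k$-th power of the count of bad covers with $f(n,k)$) then gives a failure probability that decays exponentially in $n$, with the constant $3.5$ in $t=3.5\,n/\ln k$ yielding the $7$ in the tour bound. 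No per-cycle probabilities, no concentration inequality, no correlation analysis. This ``best of $k$'' trick is the idea your proposal is missing.

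A smaller but genuine issue: for odd $k$ you propose to peel off a perfect matching, but an arbitrary $k$-regular (non-bipartite) graph need not have one. The paper instead passes to the directed double cover (each edge becomes two opposite arcs, giving a $k$-regular digraph), and then reduces the degree to the nearest power of two via Cole--Ost--Schirra on the bipartite encoding; this works for all $k$.
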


The inspiration behind our work, apart from Vishnoi's result, is the large body of work on fast 
algorithms for finding matchings and edge-colorings in regular bipartite graphs. The seminal
work in this area is an ingeniously simple algorithm due to Gabow and Kariv \cite{GabowK82}. 
Noteworthy research on this topic include
those by Schrijver \cite{Schrijver98}, Alon \cite{Alon03}, 
and Cole, Ost and Schirra \cite{ColeOS01}. The last gives the best known
bound for coloring bipartite multi-graphs with $m$ edges and maximum degree $D$ in 
time $O(m \log D)$.

Our search for a faster algorithm begins with the following question.
Given a fast algorithm for finding a  matching in regular bipartite graphs,
can one introduce randomization and produce an almost uniform distribution
on the matchings output? 
The essential problem with this approach is understanding and dealing with the inherent sampling biases towards various matchings.
One of our ideas towards mitigating this effect is obtaining a coloring at random and considering 
every matching in the coloring.  It helps that most of the simple algorithms
for matching also find a coloring with roughly similar running times.
It seems prudent to consider the simplest of the
above algorithms and this, without question, is the one by Gabow and Kariv. Unfortunately even this
seems difficult, as we shall discuss in the coming sections. However, we show
that a modification of the Gabow-Kariv algorithm does make it amenable to such an analysis.

Our algorithm improves upon Vishnoi's \cite{Vishnoi12} method by using a faster sampling procedure.
The corresponding sampling distribution may be quite far from uniform.
However, we demonstrate bounds on the measure concentration around cycle covers with few cycles, using simple counting arguments. The algorithm will be described in Section \ref{sec_main_alg}, followed by its analysis in Section \ref{sec_main_analysis}.

Derandomizing our algorithm seems like a difficult problem. We initiate research on deterministic linear time algorithms for the {\sc{RegGraphTSP}} by giving a simple algorithm that achieves a $\frac{3}{2}+O\left(\frac{1}{\sqrt{k}}\right)$ factor approximation. 
Here, the main idea is to traverse the graph in a depth-first-like manner and keep removing long cycles. These cycles cover a good fraction of the vertices. The cycles and the uncovered vertices can then be connected by a spanning tree. We devote Section \ref{sec_det_alg} for this algorithm and its analysis.

\section{The Randomized Algorithm}\label{sec_main_alg}

The high level idea behind our algorithm is similar to that of Vishnoi's. Find a cycle cover of the graph, and then connect the cycles using a spanning tree. Recall that a cycle cover of a graph is a collection of vertex-disjoint cycles that cover all its vertices. We wish to construct a cycle cover such that it has a small number of cycles with high probability. It is folklore that cycle covers in a graph correspond to matchings in the natural encoding of the given graph as a bipartite graph (see Lemma \ref{lem_deg_red}). Indeed, Vishnoi selects a random matching in such an encoding.

Given a $k$-regular graph, we intend to first partition the edges into $k$ cycle covers, and then select the best one. Our algorithm to find the partition uses ideas from the Gabow-Kariv algorithm, which finds a minimum edge-coloring of an input graph. Indeed, partitioning edges of a graph into cycle covers amounts to finding an edge-coloring of its bipartite encoding. However, the Gabow-Kariv algorithm works only on graphs with vertex degrees which are powers of two. This is our first challenge. A regular graph (especially of odd degree) need not have regular subgraphs of all degrees. To get over this we will work with regular directed graphs.

\begin{definition}
We say that a directed graph is $k$-regular if the in-degree as well as the out-degree of each vertex is $k$.
\end{definition}

The reason for working with directed graphs is that one can effectively partition the edges of a $k$-regular directed graph into $k$ cycle covers. As a consequence we have the following lemma, which ensures there is no loss of generality if we restrict our attention to the case where the degree $k$ is a power of two. This lemma relies on the algorithm by Cole, Ost, and Schirra \cite{ColeOS01}, which partitions the edges of any given $k$-regular bipartite undirected graph with $n$ vertices into perfect matchings, and runs in time $O(nk\log k)$.

\begin{lemma}\label{lem_deg_red}
Given a $k$-regular directed graph $G'=(V,A)$ with $n$ vertices and $k'<k$, there is an algorithm which outputs a $k'$-regular subgraph of $G'$, and runs in time $O(nk\log k)$.
\end{lemma}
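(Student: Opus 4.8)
The plan is to reduce the problem to an edge-coloring problem on an undirected bipartite graph, where the Cole--Ost--Schirra algorithm \cite{ColeOS01} can be applied as a black box. First I would form the standard bipartite encoding $H$ of $G'$: take two disjoint copies $V^{+}$ and $V^{-}$ of the vertex set $V$, and for every arc $(u,v)\in A$ place an undirected edge between the copy of $u$ in $V^{+}$ and the copy of $v$ in $V^{-}$. Since every vertex of $G'$ has out-degree $k$, each vertex of $V^{+}$ has degree exactly $k$, and since every vertex has in-degree $k$, each vertex of $V^{-}$ has degree exactly $k$. Thus $H$ is a $k$-regular bipartite graph on $2n$ vertices with $nk$ edges, and it can be built in $O(nk)$ time by one pass over $A$.

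Next I would invoke the Cole--Ost--Schirra algorithm on $H$ to partition its edge set into $k$ perfect matchings $M_{1},\dots,M_{k}$; by the stated bound this runs in time $O(nk\log k)$. The conceptual heart of the argument is the folklore dictionary between $H$ and $G'$. A perfect matching of $H$ saturates every vertex of $V^{+}$ (giving each vertex of $G'$ out-degree exactly one) and every vertex of $V^{-}$ (giving each vertex in-degree exactly one), so each $M_{i}$ translates back to a $1$-regular subdigraph of $G'$, i.e.\ a cycle cover. More generally, because the matchings are pairwise edge-disjoint, the union of any $j$ of them translates to a subdigraph of $G'$ in which every vertex has in-degree $j$ and out-degree $j$.

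To finish, since $k'<k$ there are at least $k'$ matchings available, so I would output the arcs of $G'$ corresponding to $M_{1}\cup\dots\cup M_{k'}$; by the dictionary this is precisely a $k'$-regular subgraph. Translating the chosen matchings back into arcs costs $O(nk)$, so the total running time is dominated by the call to Cole--Ost--Schirra and equals $O(nk\log k)$, as claimed.

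The argument is short, and its only genuine content is verifying that the bipartite encoding is faithful: that perfect matchings of $H$ are in exact correspondence with $1$-regular subdigraphs of $G'$, and that disjoint unions of $j$ matchings yield subdigraphs with in- and out-degree exactly $j$ everywhere. Once this correspondence is pinned down the rest is immediate; the one bookkeeping point to check is that $H$ has $2n$ vertices and $nk$ edges, so the edge-coloring bound still reads $O(nk\log k)$ rather than incurring an extra factor.
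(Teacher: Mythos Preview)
Your proposal is correct and follows essentially the same route as the paper: build the standard bipartite encoding of $G'$, apply Cole--Ost--Schirra to decompose it into $k$ perfect matchings, and keep any $k'$ of them (the paper phrases this as deleting $k-k'$ of them) to obtain the desired $k'$-regular subdigraph. Your additional remarks on the matching/cycle-cover dictionary and the running-time bookkeeping are exactly the justifications the paper leaves implicit.
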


\begin{proof}
The algorithm constructs an undirected bipartite graph $B=(V_L,V_R,F)$ as follows. For each $v\in V$ it puts vertices $v_L\in V_L$ and $v_R\in V_R$. For each arc $(u,v)\in A$, it puts an edge $(u_L,v_R)$ in $F$. Clearly $B$ is a $k$-regular bipartite graph, and the cycle covers in $G'$ are in one-to-one correspondence with the perfect matchings in $B$. The algorithm then partitions the edges of $B$ into perfect matchings, using the Cole-Ost-Schirra algorithm, and then deletes an arbitrary set of $k-k'$ matchings. This gives a $k'$-regular bipartite subgraph $B'$ of $B$. The algorithm returns the subset of $A$ consisting of arcs which survive in $B'$. (Note that the resultant graph need not be connected.)
\end{proof}


Henceforth, we will assume that $k$ is a power of $2$. Otherwise, if $2^l<k<2^{l+1}$ for some $l$, we will use the algorithm from Lemma \ref{lem_deg_red} to obtain a $2^l$-regular subgraph of the given graph, and use this subgraph for finding the cycle cover. As stated earlier, our algorithm to find the cycle cover is inspired from the Gabow-Kariv algorithm \cite{GabowK82}.
The Gabow-Kariv algorithm uses the classic Divide-and-Conquer paradigm. It first finds an Euler tour of the input graph. The divide step comprises of partitioning the edge set into two, putting alternate edges in the tour in different parts. This is followed by recursing on the two parts. The natural place to introduce randomness here is in the construction of the Euler tour, but we are not able to analyze this. We facilitate analysis by keeping the divide step simple, and pushing the work to the conquer part. Details follow.


\begin{definition}
Let $G'=(V,A)$ be a $k$-regular directed graph. A \textit{cycle cover coloring} of this graph is a partition of the arc set into $k$ cycle covers. Formally, it is a function $c:A\longrightarrow\{1,\ldots,k\}$, such that for each $i\in\{1,\ldots,k\}$, the set $c^{-1}(i)$ is a cycle cover of $G'$. 
\end{definition}

In other words, for any vertex $v$ and color $i$, exactly one arc leaving $v$ and exactly one arc entering $v$ have color $i$. 
It is not difficult to see that the Cole-Ost-Schirra \cite{ColeOS01} result implies that regular directed graphs have efficiently constructible cycle cover colorings. Our procedure, {\sc{RandCycleCoverColoring}}, which outputs a random cycle cover coloring of a regular directed graph, has the following guarantee.

\begin{algorithm}
\caption{{\sc{RandCycleCoverColoring}}($G$)}
\begin{algorithmic}[1]
\STATE \COMMENT{$G$: a $k$-regular $n$ vertex directed graph with $k$ being a power of $2$; returns a random cycle cover coloring of $G$.}
\STATE If $k=1$ \textbf{return} $G$ with each arc colored $1$.
\STATE Convert $G$ into a $k/2$-regular digraph $H=(V',A')$ with $2n$ vertices, by splitting every vertex $v$ into two vertices: $v_{0}$ and $v_{1}$. Distribute the arcs incident on $v$ randomly among $v_0$ and $v_1$, so that each gets half of the incoming and half of the outgoing arcs.
\STATE Obtain an edge coloring $c':A'\longrightarrow\{1,\ldots,k/2\}$ of $H$ by calling {\sc{RandCycleCoverColoring}}$(H)$.
\STATE Fuse the split vertices back to obtain $G$ with the coloring $c'$. For each $i$, the edges colored $i$ constitute a $2$-regular directed graph. Call it $G_i$.
\STATE For each $i\in\{1,\ldots,k/2\}$ and each connected component in $G_i$, find an Euler tour and put the arcs alternately in sets $S$ and $S'$. Pick one of $S$ and $S'$ at random, and recolor that set of arcs to $i+k/2$.
\end{algorithmic}
\end{algorithm}

\begin{theorem}\label{thm_cyclecovercoloring}
Let $G$ be a $k$-regular directed graph with $n$ vertices, where $k$ is a power of $2$. The algorithm {\sc{RandCycleCoverColoring}}, on input $G$, outputs a random cycle cover coloring of $G$, which with high probability contains a cycle cover with at most $3.5n/\ln k$ components. The algorithm runs in time $O(nk\log k)$.
\end{theorem}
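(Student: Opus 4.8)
The plan is to prove two things: a correctness/structural claim (the output is a valid cycle cover coloring, and one of its $k$ cycle covers has few components with high probability) and a running-time claim. Let me think through the approach.

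First, correctness. The algorithm recursively produces a coloring with $k$ colors, each color class being a cycle cover (a 2-factor in the directed sense, i.e. in/out degree exactly 1 per color). I need to verify this invariant is maintained. The base case $k=1$ is immediate. For the recursive step: the vertex-splitting creates a $k/2$-regular digraph $H$, and by induction RandCycleCoverColoring gives it a valid $k/2$-coloring with each color class a cycle cover of $H$. When we fuse vertices back, each color class $c'^{-1}(i)$ becomes a subgraph $G_i$ of $G$ where each vertex has in-degree 2 and out-degree 2 (the two split copies each contributed in/out degree 1). So $G_i$ is 2-regular. The Euler-tour-and-alternate step splits each connected component's edges into two cycle covers — this is the standard fact that a connected Eulerian digraph's edges alternate-2-color into two 1-regular subgraphs. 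So recoloring one half to $i+k/2$ yields $k$ color classes, each a cycle cover. Good — so I'd formalize this invariant and induct.

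The substantive part is the probabilistic claim about the number of components. Here the key quantity is the total number of cycles summed over all $k$ color classes; if this sum is small, then by averaging at least one color class has few cycles. So I would track $E[\text{total number of cycle-cover components across all }k\text{ colors}]$, or better, bound it recursively. Let me think about what a cycle in the final coloring corresponds to. A 2-regular $G_i$ decomposes into cycles, and each such cycle, when Euler-alternated and split, becomes... actually the alternating split of a single Eulerian component of length $2\ell$ yields two directed cycle-covers whose cycles partition the component. The number of new cycles created relates to how the Euler tour of $G_i$ breaks up. The crucial point: a long cycle in $G_i$ (a single component that is Eulerian with many edges) gets split into two cycle covers, and a long Eulerian tour tends to break into few cycles. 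So I want to show the expected number of short cycles is small, or rather that the recursion keeps the total cycle count growing slowly — roughly, each level of recursion (there are $\log k$ levels) contributes an additive amount, and the number of cycles in a final color class is controlled by the number of ``small'' components encountered.

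The hard part, and where I would focus, is the measure-concentration argument controlling the number of cycles. The intuition from the paper's abstract is a counting argument rather than a mixing-time argument. I expect the core lemma is: because arcs are distributed uniformly at random when splitting vertices, a fixed short cycle survives as a monochromatic cycle with small probability, so the expected number of short cycles (of length below some threshold $t \approx \ln k$) is small; cycles of length at least $t$ number at most $n/t = O(n/\ln k)$ automatically. The factor $3.5/\ln k$ suggests a threshold around $t = \ln k$ with a constant absorbing the short-cycle contribution. So the plan: (i) bound the expected count of monochromatic short cycles over the whole coloring via the randomness in the splitting step, using that a cycle of length $\ell$ must have its $\ell$ arcs routed consistently through each of the $\log k$ recursion levels, giving survival probability decaying in $\ell$; (ii) observe long cycles are automatically few; (iii) sum to bound the expected total number of cycles across all $k$ colors by roughly $3.5n/\ln k \cdot k$, hence by averaging and Markov some color class has at most (about) $3.5n/\ln k$ cycles; (iv) boost to high probability with failure decaying exponentially in $n$ — this last step likely needs a concentration inequality (martingale/bounded-differences over the independent random choices) rather than plain Markov, since the statement claims exponential decay. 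I anticipate step (iv), upgrading an expectation bound to exponential concentration, to be the main technical obstacle, since the many random choices across recursion levels are not independent in an obvious product structure and I would need a careful Azuma-type or negative-dependence argument.

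Finally, the running time. Each recursion level processes $O(nk)$ arcs in total (the digraph always has $nk$ arcs, redistributed among more vertices), and the vertex-splitting, fusing, and Euler-tour/alternation steps are all linear in the number of arcs at that level, i.e. $O(nk)$ per level. Since $k$ is a power of 2 there are $\log k$ levels of recursion, giving $O(nk\log k)$ total. I would present this as a straightforward recurrence $T(k) = T(k/2) + O(nk)$ (noting the problem size in arcs stays $nk$ throughout), which solves to $O(nk \log k)$, matching the claimed bound. This part I expect to be routine.
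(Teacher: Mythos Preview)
Your running-time argument and your inductive verification that the output is a valid cycle cover coloring are fine and match the paper.

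For the probabilistic claim, however, your route diverges from the paper's and contains a real gap at exactly the point you flagged. The paper does \emph{not} bound an expectation and then concentrate. Instead it proves two counting lemmas and combines them via a union bound:
\begin{itemize}
\item A uniform upper bound on the probability that the algorithm outputs \emph{any fixed} cycle cover coloring $c$, namely $\Pr[c]\le f(n,k)=\bigl[k^k/(k!)^2\bigr]^n\cdot 2^{-(k-1)}$, proved by induction through the recursion (analyzing the probability that each vertex splits ``correctly'' and the final recoloring picks the right half).
\item A bound on the number of cycle covers with exactly $r$ components, at most $\binom{n}{r}k^{n-r}$, hence a bound on the number of ``bad'' colorings (those in which all $k$ color classes have more than $t=\gamma n/\ln k$ components).
\end{itemize}
Multiplying the count of bad colorings by $f(n,k)$ and optimizing $\gamma$ (the paper takes $\gamma=3.5$) gives a failure probability of the form $q_k^{kn}\cdot\mathrm{poly}(n)^k$ with $q_k<1$, which is automatically exponentially small in $n$. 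No martingale or bounded-differences argument is needed.

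Your plan instead aims to bound the expected total number of monochromatic cycles and then upgrade via Azuma. The first-moment part is plausible, but your step (iv) is not a detail: the random choices across recursion levels are nested (splits at one level determine the instance at the next), and it is unclear how to expose them as a bounded-difference martingale that yields exponentially small failure in $n$. You yourself call this the ``main technical obstacle,'' and you give no mechanism for it. The paper's uniform-probability-per-coloring approach sidesteps this entirely; the key idea you are missing is Lemma~\ref{lem_prob_ub}, which turns the whole argument into pure counting.
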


The running time $T(n,k)$ of {\sc{RandCycleCoverColoring}} on a $k$-regular graph with $n$ vertices is given by the recurrence $T(n,k)=T(2n,k/2)+O(nk)$. This yields $T(n,k)=O(nk\log k)$. The claim, that the random cycle cover coloring contains a cycle cover with a small number of components with high probability, is deferred to the next section. 

Theorem \ref{thm_main} follows from Theorem \ref{thm_cyclecovercoloring} in the following manner. Given a $K$-regular undirected graph with $n$ vertices, use the degree reduction algorithm from Lemma \ref{lem_deg_red} to get a regular graph $G=(V,E)$ with degree $k=2^{\lfloor\log_2K\rfloor}$. Construct a directed graph $G'=(V,A)$ in the obvious manner. For each edge $\{u,v\}\in E$, include the arcs $(u,v)$ and $(v,u)$ in $A$. Clearly, $G'$ is a $k$-regular directed graph. Now run the procedure {\sc{RandCycleCoverColoring}} to get a cycle cover coloring of $G'$. Choose the best cycle cover from this cycle cover coloring. This cycle cover contains at most $\frac{3.5n}{\ln k}$ cycles, with high probability. 

The rest of the processing is routine. Replacing the arcs constituting the cycle cover with corresponding edges in $G$, contract these components in $G$, and find a spanning tree of the resulting minor. Duplicate the edges of the spanning tree, so that these edges and the edges in the cycle cover form an Eulerian spanning subgraph of $G$. Find an Euler tour in this graph and short-cut it to get a TSP tour of $G$. The cost of this tour is at most $n+2\times\frac{3.5n}{\ln k}=\left(1+\frac{7}{\ln K-O(1)}\right)n$, and this post-processing can be done in time $O(nk)$, that is linear in the size of the graph.


\section{Analysis of {\sc{RandCycleCoverColoring}}}\label{sec_main_analysis}

We first bound from above the probability of getting any fixed cycle cover coloring.

\begin{lemma}\label{lem_prob_ub}
Consider a fixed cycle cover coloring $c$ of the $k$-regular directed graph $G'=(V,A)$, where $k$ is a power of $2$ and $n=|V|$. The probability that the algorithm outputs $c$ is at most $f(n,k)$, where
\[f(n,k)=\left[\frac{k^k}{(k!)^2}\right]^n\times\frac{1}{2^{k-1}}\]
\end{lemma}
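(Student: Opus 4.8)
The plan is to prove this by induction on $k$ over the powers of $2$, following the recursive structure of {\sc{RandCycleCoverColoring}} directly. The base case $k=1$ is immediate: the algorithm colors every arc with color $1$ deterministically, so the unique cycle cover coloring is produced with probability $1$, matching $f(n,1)=1$. For the inductive step, the first observation is that the target coloring $c$ forces everything about the recursive call: for the final output to equal $c$, the coloring $c'$ returned by {\sc{RandCycleCoverColoring}}$(H)$ must merge colors $i$ and $i+k/2$, i.e. $c'^{-1}(i)=c^{-1}(i)\cup c^{-1}(i+k/2)$ for each $i\le k/2$. Thus $c'$ is uniquely determined by $c$, and I only need to analyze three events: that the random vertex-split produces a graph $H$ on which this $c'$ is a legal cycle cover coloring, that the recursion then outputs exactly $c'$, and that step 6 splits each $G_i$ back into the correct pair $\{c^{-1}(i),c^{-1}(i+k/2)\}$.

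First I would count the ``good'' splits. At a vertex $v$, the $k$ incoming arcs comprise exactly two arcs of each $c'$-color (the $c$-color-$i$ and $c$-color-$(i+k/2)$ arcs), and for $c'^{-1}(i)$ to be $1$-in-$1$-out at both $v_0$ and $v_1$, the split must send exactly one arc of each $c'$-color to $v_0$. There are $2^{k/2}$ such assignments out of the $\binom{k}{k/2}$ equiprobable ways to halve the in-arcs, and independently the same for the out-arcs; multiplying over all vertices gives
\[
\Pr[\text{good split}]=\left(\frac{2^{k/2}}{\binom{k}{k/2}}\right)^{2n}=\left(\frac{2^{k}}{\binom{k}{k/2}^{2}}\right)^{n}.
\]
A non-good split makes $c'$ illegal on $H$, so such splits contribute $0$ to $P(c)$; for each good split, the recursion on the resulting $(k/2)$-regular digraph on $2n$ vertices outputs $c'$ with probability at most $f(2n,k/2)$ by the inductive hypothesis, uniformly over which good split occurred.

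For step 6, I would use only a crude upper bound, which is all that is needed and conveniently avoids reasoning about whether the Euler-tour alternation actually separates the two colors. After fusing, each $G_i=c'^{-1}(i)$ is a fixed $2$-regular subgraph of $G'$ (independent of the split), and for each of its connected components the probability that the coin flip recolors the correct half so as to match $c$ is at most $1/2$ (it is $0$ unless the alternation happens to produce exactly $\{c^{-1}(i),c^{-1}(i+k/2)\}$ on that component). Since there are $k/2$ values of $i$ and each $G_i$ has at least one component, there are at least $k/2$ independent coin flips, giving $\Pr[\text{step 6 yields } c\mid c']\le 2^{-k/2}$. Because this factor is independent of the split, I can pull it out of the sum over splits and combine the three estimates to obtain $P(c)\le \left(2^{k}/\binom{k}{k/2}^{2}\right)^{n}\cdot f(2n,k/2)\cdot 2^{-k/2}$; a short computation using $\binom{k}{k/2}=k!/((k/2)!)^{2}$ and $k^{k}/(k/2)^{k}=2^{k}$ verifies the right-hand side equals $f(n,k)$, closing the induction. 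The one point requiring care is the bookkeeping that lets the step-6 factor and the inductive bound be factored out of the sum over splits; I expect this, rather than any single estimate, to be the main thing to get right.
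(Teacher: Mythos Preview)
Your proposal is correct and follows essentially the same inductive decomposition as the paper: the same definition of $c'$, the same $\bigl(2^{k/2}/\binom{k}{k/2}\bigr)^{2n}$ count for compatible vertex splits, the inductive bound $f(2n,k/2)$ for the recursive call, and the crude $2^{-k/2}$ bound for step~6. If anything you are slightly more explicit than the paper in justifying why the inductive bound and the step-6 factor are uniform over good splits (so they can be pulled out of the sum), and in noting that $G_i=c'^{-1}(i)$ is fixed once $c'$ is fixed; the paper simply asserts the product form.
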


\begin{proof}
By induction on $k$. The claim is trivial for $k=1$. Assume now that $k>1$. Consider the coloring $c':A\longrightarrow\{1,\ldots,k/2\}$ given by
\[c'(e)=\left\{
\begin{array}{ll}
c(e) & \text{if }c(e)<k/2\\
c(e)-k/2 & \text{otherwise}
\end{array}
\right.\]
If a run of the algorithm outputs the coloring $c$ then it must obtain the coloring $c'$ at the end of the recursion step. Having obtained $c'$ at the end of the recursion step, the probability of obtaining $c$ is at most $1/2^{k/2}$, since for each $i\in\{1,\ldots,k/2\}$ the probability that arcs having color $i$ get recolored correctly is at most $1/2$.

Next, in order to obtain the coloring $c'$ at the end of the recursion step, it is necessary that for all $v\in V$ and $i\in\{1,\ldots,k/2\}$, the two edges having their tails (resp. heads) at $v$, colored $i$ in $c'$, must separate during the splitting of the vertex $v$. Thus, the probability that the edges having tails (resp. heads) at $v$ get distributed correctly between $v_0$ and $v_1$ is $2^{k/2}/{{k}\choose{k/2}}$. The probability that the vertex $v$ gets split correctly is $\left[2^{k/2}/{{k}\choose{k/2}}\right]^2$. Therefore, the probability that all $n$ vertices get split correctly is $\left[2^{k/2}/{{k}\choose{k/2}}\right]^{2n}$.

Finally, the probability of obtaining $c'$ after the recursive call, given that all vertices split correctly, is at most $f(2n,k/2)$ by induction. Thus we have
\begin{eqnarray*}
\Pr[\text{algorithm outputs }c] & \leq & \left[\frac{2^{k/2}}{{{k}\choose{k/2}}}\right]^{2n}\times f(2n,k/2)\times\frac{1}{2^{k/2}}\\
 & = & \left[\frac{2^{k/2}}{{{k}\choose{k/2}}}\right]^{2n}\times\left[\frac{(k/2)^{k/2}}{((k/2)!)^2}\right]^{2n}\times\frac{1}{2^{\frac{k}{2}-1}}\times \frac{1}{2^{k/2}}\\
 & = & \left[\frac{k^k}{(k!)^2}\right]^n\times\frac{1}{2^{k-1}}=f(n,k)
\end{eqnarray*}
\end{proof}

Using the fact, $\ln(k!)\geq k\ln k-k$, arising from the Stirling's approximation, we have
\begin{equation}\label{eqn_stirling}
f(n,k)=\left[\frac{k^k}{(k!)^2}\right]^n\times\frac{1}{2^{k-1}}\leq\left[\frac{k^k}{(k/e)^{2k}}\right]^n\times\frac{1}{2^{k-1}}=\left(\frac{e^2}{k}\right)^{kn}\times\frac{1}{2^{k-1}}
\end{equation}

We next bound from above the number of cycle covers with exactly $r$ components.

\begin{lemma}\label{lem_ub}
Let $G'=(V,A)$ be a $k$-regular directed graph with $n$ vertices (where $k$ is not necessarily a power of $2$). The number of cycle covers of $G'$ having $r$ cycles is at most ${{n}\choose{r}}k^{n-r}$.
\end{lemma}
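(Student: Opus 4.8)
The plan is to identify each cycle cover of $G'$ with a permutation $\pi$ of $V$ satisfying $(v,\pi(v))\in A$ for every $v$, where the number of cycles of the cover equals the number of cycles of $\pi$. I would then fix an arbitrary total order on $V$ and, for a cover with $r$ cycles, let $S\subseteq V$ be the set of the $r$ cycle minima (the least-indexed vertex in each cycle), so that $|S|=r$. The central idea is to encode each such cover by the pair $\bigl(S,\ \pi|_{V\setminus S}\bigr)$, that is, the set of representatives together with the chosen out-arc of every \emph{non}-representative vertex. Since there are $\binom{n}{r}$ choices for $S$ and at most $k$ out-arcs at each of the $n-r$ vertices outside $S$ (the out-degree being exactly $k$), the number of possible encodings is at most $\binom{n}{r}k^{n-r}$. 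It then suffices to show this encoding is injective.

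The heart of the argument, and the step I expect to be the main obstacle, is to show that the unrecorded out-arc $\pi(s)$ of each representative $s\in S$ is nevertheless recoverable from the pair $\bigl(S,\ \pi|_{V\setminus S}\bigr)$. I would reason as follows. Write the cycle $C$ containing $s$ as $s\to v_1\to\cdots\to v_m\to s$. Every vertex $v_1,\dots,v_m$ lies outside $S$, since a cycle contains only one minimum; hence all of $\pi(v_1),\dots,\pi(v_m)$ are recorded, and the predecessor relation is known on $C$ except at $s$ itself. Because $\pi$ is a bijection, each vertex has a unique preimage, so starting from $s$ and repeatedly applying the (partial, well-defined) inverse of $\pi|_{V\setminus S}$ traces the chain $s,v_m,v_{m-1},\dots,v_1$ backwards. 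This process halts precisely at $v_1$, which is the unique vertex of $C$ having no preimage inside $V\setminus S$ (its actual preimage is $s\in S$). Thus $v_1=\pi(s)$ is determined.

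With recoverability established, the entire permutation $\pi$ can be reconstructed from its encoding, so distinct covers with $r$ cycles yield distinct pairs; the resulting injection gives the bound $\binom{n}{r}k^{n-r}$. I would remark that $k$-regularity enters the count only through the out-degree bound of $k$ at the non-representative vertices, and that no assumption on $k$ being a power of two is used, consistent with the hypothesis of the statement.
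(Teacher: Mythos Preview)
Your argument is correct and is essentially the same counting idea as the paper's: both encode a cycle cover with $r$ cycles by singling out one vertex per cycle (the cycle minimum) and recording the $n-r$ out-arc choices at the remaining vertices, yielding the bound $\binom{n}{r}k^{n-r}$. The only cosmetic difference is that the paper packages the $\binom{n}{r}$ factor as a bound on the number of ordered size-tuples $(s_1,\dots,s_r)$ and traces each cycle forward from its minimum, whereas you record the set $S$ of minima directly and reconstruct each $\pi(s)$ by tracing backward; the underlying injection is the same.
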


\begin{proof}
Number the vertices of $G'$ arbitrarily. Consider a cycle cover $C\subseteq A$ of $G'$ which has $r$ components, and let $(S_1,\ldots,S_r)$ be the partition of $V$ induced by $C$, where $S_1,\ldots,S_r$ are sorted by the the smallest numbered vertices that they contain. We associate the tuple $(|S_1|,\ldots,|S_r|)$ with $C$.

Given a tuple $(s_1,\ldots,s_r)$ such that $\sum_{i=1}^rs_i=n$, let us upper bound the number of cycle covers $C$ of $G'$ that could be associated with this tuple. First note that each $s_i\geq2$ and hence $r\leq n/2$. Let $(S_1,\ldots,S_r)$ be the partition induced by $C$, sorted by the the smallest numbered vertices that they contain; $s_i=|S_i|$. Given $S_1,\ldots,S_{i-1}$, the smallest numbered vertex $v_0$ not in $S_1\cup\cdots\cup S_{i-1}$ must be in $S_i$, and that must be the smallest numbered vertex in $S_i$ too. Let the cycle containing $v_0$ in $C$ be $(v_0,\ldots,v_{s_i-1})$ where $S_i=\{v_0,\ldots,v_{s_i-1}\}$. Then each $v_j$ must be one of the $k$ out-neighbors of $v_{j-1}$. Thus, given $S_1,\ldots,S_{i-1}$, the number of possibilities for $S_i$ is at most $k^{s_i-1}$. Therefore, the number of cycle covers of $G'$ associated with the tuple $(s_1,\ldots,s_r)$ is at most $k^{\sum_{i=1}^r(s_i-1)}=k^{n-r}$.

Finally, 
it is well known that the number of tuples $(s_1,\ldots,s_r)$, for a fixed $r$, such that $\sum_{i=1}^rs_i=n$ and each $s_i\geq2$, is at most ${{n-r+1}\choose{r-1}}<{{n}\choose{r}}$ for $r\leq n/2$. Thus, the number of cycle covers of $G'$ having $r$ cycles is at most ${{n}\choose{r}}k^{n-r}$.
\end{proof}

Now we are ready to prove Theorem \ref{thm_cyclecovercoloring}.

\begin{proof}[\textbf{Proof of Theorem \ref{thm_cyclecovercoloring}}]
Let $t=\gamma n/\ln k$, where $\gamma$ is a constant independent of $n$ as well as $k$, which we will fix later. Call a cycle cover \textit{bad} if it contains more than $t$ components; else call it \textit{good}. Call a cycle cover coloring $c:A\longrightarrow\{1,\ldots,k\}$ \textit{bad} if for each $i$, the cycle cover $c^{-1}(i)$ is bad; else call it \textit{good}. We need to prove an upper bound on the probability that the random cycle cover coloring sampled by the algorithm is bad. Note that if $k$ is small enough so that $\gamma/\ln k\geq1/2$ then every cycle cover is good, and hence, so is every cycle cover coloring. So assume $k$ is ``large enough'', that is, $\gamma/\ln k<1/2$, and thus $t<n/2$.

By Lemma \ref{lem_ub} the number of bad cycle covers is at most
\[\sum_{r=t+1}^{n/2}{{n}\choose{r}}k^{n-r}\leq\left(\frac{n}{2}-t\right){{n}\choose{t}}k^{n-t}\]
where the inequality follows from the fact that the function $r\longmapsto{{n}\choose{r}}k^{n-r}$ attains its maximum at $\lfloor\frac{n+1}{k+1}\rfloor<t$, and it is non-increasing in $\left[\lfloor\frac{n+1}{k+1}\rfloor,n\right]$. The number of bad cycle cover colorings is at most the number of ordered tuples of $k$ bad cycle covers, which is at most
\begin{eqnarray*}
\left(\frac{n}{2}-t\right)^k{{n}\choose{t}}^kk^{k(n-t)} & = & \left(\frac{n}{2}-\frac{\gamma n}{\ln k}\right)^k{{n}\choose{\frac{\gamma n}{\ln k}}}^kk^{k\left(n-\frac{\gamma n}{\ln k}\right)}\\
 & \leq & \left(\frac{n}{2}-\frac{\gamma n}{\ln k}\right)^k\left(\frac{e\ln k}{\gamma}\right)^{\frac{\gamma kn}{\ln k}}\left(\frac{k}{e^{\gamma}}\right)^{kn}
\end{eqnarray*}
Let $c$ be the random cycle cover coloring output by the algorithm. By Lemma \ref{lem_prob_ub} and equation (\ref{eqn_stirling}), the probability that $c$ is bad, is given by
\begin{eqnarray*}
\Pr[c\text{ is bad}] & \leq & \left(\frac{n}{2}-\frac{\gamma n}{\ln k}\right)^k\left(\frac{e\ln k}{\gamma}\right)^{\frac{\gamma kn}{\ln k}}\left(\frac{k}{e^{\gamma}}\right)^{kn}\times\left(\frac{e^2}{k}\right)^{kn}\times\frac{1}{2^{k-1}}\\
 & = & \left[\left(\frac{e\ln k}{\gamma}\right)^{\frac{\gamma}{\ln k}}\times\frac{1}{e^{\gamma-2}}\right]^{kn}\times\left(\frac{n}{2}-\frac{\gamma n}{\ln k}\right)^k\times\frac{1}{2^{k-1}}\\
 & = & q_k^{kn}\times\left(\frac{n}{2}-\frac{\gamma n}{\ln k}\right)^k\times\frac{1}{2^{k-1}}
\end{eqnarray*}
where
\[q_k=\left(\frac{e\ln k}{\gamma}\right)^{\frac{\gamma}{\ln k}}\times\frac{1}{e^{\gamma-2}}\]
We choose $\gamma$ so that for every ``large enough'' $k$, $q_k<1$. This ensures an exponential decay of failure probability of the algorithm with respect to $n$. We therefore need
\[\frac{1}{q_k}=\left(\frac{\gamma}{e\ln k}\right)^{\frac{\gamma}{\ln k}}\times e^{\gamma-2}>1\text{ i.e. }\frac{\gamma\ln\gamma}{\ln k}-\frac{\gamma(1+\ln\ln k)}{\ln k}+\gamma-2>0\]
It is sufficient to have, for every  ``large enough'' $k$,
\[\gamma\left[1-\frac{1+\ln\ln k}{\ln k}\right]>2\text{ i.e. }\gamma>\frac{2}{1-\frac{1+\ln\ln k}{\ln k}}=g(k)\text{ say.}\]
Thus, it is sufficient to ensure that for every $k$, either $\gamma\geq(\ln k)/2$ or $\gamma>g(k)$. That is, $\gamma>\max_k\min((\ln k)/2,g(k))$. It is easy to check that taking $\gamma=3.5$ suffices.
\end{proof}

\section{A Deterministic $\left(\frac{3}{2}+O\left(\frac{1}{\sqrt{k}}\right)\right)$-approximation Algorithm}\label{sec_det_alg}

The approach here is the same as the previous one: find a small number of cycles in the graph covering a large number of vertices, and connect them using a spanning tree. The main difference is that while we construct a cycle cover in the previous algorithm, here we find a collection of vertex-disjoint cycles covering almost half the vertices. As before, we contract the cycles, and connect them and the uncovered vertices together with a spanning tree. {\sc{LongCycles}}, given by Algorithm \ref{alg_longcycles}, essentially does a depth-first traversal, while repeatedly removing long cycles and vertices that cannot be fit in long cycles. From the description, it is clear that this algorithm runs in time $O(nk)$, and that it finds cycles of length no less than $d=2\sqrt{k}$. In order to derive the approximation ratio of our algorithm, we first need to bound from above the size of the set $B$ returned by {\sc{LongCycles}}. Let $m=|B|$.

\begin{algorithm}
\caption{{\sc{LongCycles}}($G$)}
\begin{algorithmic}[1]
\STATE \COMMENT{$G=(V,E)$: a $k$-regular $n$ vertex directed graph; returns a collection of cycles $\mathcal{C}$, each having length at least $2\sqrt{k}$, and a set $B$ of vertices not in any cycle in $\mathcal{C}$.}
\STATE Initialize $H:=G$, $\mathcal{C}=\emptyset$, $B:=\emptyset$, $P:=()$, $d=2\sqrt{k}$.
\STATE \COMMENT{$P$ always remains a path in $H$.}
\WHILE {$H$ is nonempty}
\IF {$P$ is empty}
\STATE Add an arbitrary vertex of $H$ to $P$.
\ELSE
\STATE \COMMENT{Suppose $P=(v_1,\ldots,v_t)$ with $t>0$.}
\IF {$v_t$ has a neighbor $u$ in $H$ outside $P$}
\STATE Append $u$ to $P$.
\ELSIF {$t\geq d$ and $v_t$ has a neighbor $v_s$ in $P$ for $s\leq t-d+1$}
\STATE Remove the vertices $v_s,v_{s+1},\ldots,v_{t-1},v_t$ from $P$ and from $H$; add this cycle to $\mathcal{C}$.
\ELSE
\STATE Remove $v_t$ from $P$ and from $H$, and add it to $B$.
\ENDIF
\ENDIF
\ENDWHILE
\STATE Return $\mathcal{C},B$.
\end{algorithmic}
\label{alg_longcycles}
\end{algorithm}

\begin{lemma}\label{lem_ub_b}
$m\leq\frac{n(k-2)}{2(k-d+1)}$
\end{lemma}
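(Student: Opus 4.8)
The plan is to combine a local observation about each vertex placed in $B$ with a global edge-counting argument.

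\emph{Local observation.} First I would analyze the moment a vertex $v=v_t$ is moved to $B$ in the final \textbf{else} branch. Reaching this branch means $v_t$ has no neighbor outside $P$ in the current graph $H$, so every $H$-neighbor of $v_t$ lies on the path $P=(v_1,\dots,v_t)$; moreover the cycle-forming condition failed, so $v_t$ has no neighbor $v_s$ with $s\le t-d+1$ (this is vacuous when $t<d$). Hence the $H$-neighbors of $v_t$ occupy positions in $\{t-d+2,\dots,t-1\}$, giving $\deg_H(v_t)\le d-2$ at the instant of removal. Since $v_t$ has $k$ neighbors in $G$, at least $k-(d-2)=k-d+2$ of them were already removed from $H$ before $v_t$ was.

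\emph{Global count.} Next I would set up a charging scheme over the $nk/2$ edges of $G$. Process the removals in the order the algorithm performs them (each event deletes either one $B$-vertex or one whole cycle), and charge each edge to whichever endpoint is removed later. Writing $X$ for the number of edges whose two endpoints are deleted in the same event, exactly these edges receive no charge, so $\sum_v \mathrm{ch}(v) = nk/2 - X$, where $\mathrm{ch}(v)$ is the number of neighbors of $v$ removed strictly before $v$. By the local observation, $\mathrm{ch}(v)\ge k-d+2$ for every $v\in B$, hence $\sum_{v}\mathrm{ch}(v)\ge \sum_{v\in B}\mathrm{ch}(v)\ge m(k-d+2)$. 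On the other hand, every cycle placed in $\mathcal{C}$ has length $\ell\ge d$ and all $\ell$ of its cycle edges are deleted simultaneously with its $\ell$ vertices; summing over all cycles, which together cover the $n-m$ vertices not in $B$, yields $X\ge n-m$.

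Combining gives $m(k-d+2)\le nk/2 - X \le nk/2-(n-m)$, and rearranging, $m(k-d+1)\le n(k-2)/2$, which is the claim. The step I expect to require the most care is the edge accounting: one must be sure each edge is charged to exactly one endpoint, and that the $n-m$ cycle edges credited to $X$ are genuinely those that escape the $B$-vertex charges. It is precisely this crediting of the simultaneously-deleted cycle edges that sharpens the bound $m\le nk/(2(k-d+2))$ obtained from ignoring $X$ into the stated $m\le n(k-2)/(2(k-d+1))$. A minor point to handle cleanly is the boundary case $t<d$, where the window argument still gives $\deg_H(v_t)\le t-1\le d-2$, so the per-vertex bound of $k-d+2$ holds uniformly.
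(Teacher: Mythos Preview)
Your proof is correct, and the local observation---that a vertex sent to $B$ has $H$-degree at most $d-2$ at the moment of removal---is exactly the one the paper isolates. The global accounting, however, is organized differently. The paper does not set up a temporal charging scheme; instead it bounds the cut size $|E(B,V\setminus B)|$ from both sides. From the $B$ side it uses the local observation (restricted to later-removed $B$-neighbors) to get $|E(G[B])|\le (d-2)m$, hence $|E(B,V\setminus B)|\ge km-2(d-2)m=(k-2d+4)m$. From the $V\setminus B$ side, each cycle vertex has two cycle neighbors inside $V\setminus B$, leaving at most $k-2$ possible neighbors in $B$, so $|E(B,V\setminus B)|\le (n-m)(k-2)$. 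Combining yields the identical inequality $2m(k-d+1)\le n(k-2)$. Your bound $X\ge n-m$ and the paper's ``each cycle vertex spends two edges on its cycle'' encode the same fact; your charging scheme simply folds the paper's separate $B$--$B$ and $B$--$(V\setminus B)$ bookkeeping into a single sum, which is arguably a little cleaner.
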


\begin{proof}
Suppose the set $B$ of vertices returned by the algorithm is $\{u_1,\ldots,u_m\}$, with the vertices added in the order $u_1,\ldots,u_m$. Consider the snapshot of the algorithm when the vertex $u_i$ was added to $B$. At that time, vertices $u_1,\ldots,u_{i-1}$ were already removed from $H$ and added to $B$, $u_{i+1},\ldots,u_m$ were still present in $H$, and $u_i$ was the last vertex in $P$. Let $u\in\{u_{i+1},\ldots,u_m\}$ a neighbor of $u_i$. Then $u$ must be in $P$, otherwise, some neighbor of $u_i$ would have been appended to $P$, rather than $u_i$ getting removed from $P$. Similarly, the distance between $u$ and $u_i$ on $P$ would be less than $d-1$, otherwise, a cycle would have been removed instead. Thus, the number of neighbors of $u_i$ among $u_{i+1},\ldots,u_m$ must be at most $d-2$. Therefore, the number of edges in the subgraph of $G$ induced by $B$ is less than $(d-2)m$. As a consequence, the number of edges in $G$ between $B$ and $V\setminus B$ is at least $km-2(d-2)m=(k-2d+4)m$.

Next, the number of vertices in $V\setminus B$ is $n-m$ and this is exactly the set of vertices covered by cycles in $\mathcal{C}$. For each vertex in $V\setminus B$, at most $k-2$ of the $k$ edges incident on it have their other endpoint in $B$. Thus, the number of edges in $G$ between $B$ and $V\setminus B$ is at most $(n-m)(k-2)$. Hence $(k-2d+4)m\leq(n-m)(k-2)$, which implies $m\leq\frac{n(k-2)}{2(k-d+1)}$.
\end{proof}

The above lemma implies that almost half of the vertices are covered by cycles in $\mathcal{C}$. We next use it to prove the approximation ratio.

\begin{theorem}
Consider the algorithm for finding a TSP tour, which runs {\sc{LongCycles}} on the input graph, and connects the cycles in $\mathcal{C}$ using two copies of a spanning tree of the graph obtained by contracting the cycles. The approximation ratio of this algorithm is $\frac{3}{2}+O\left(\frac{1}{\sqrt{k}}\right)$.
\end{theorem}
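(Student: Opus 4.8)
The plan is to upper-bound the cost of the tour produced by this algorithm by the total number of edges in the Eulerian multigraph it assembles, and to compare this against the trivial lower bound $\mathrm{OPT}\geq n$, valid for any connected graph (an optimal Hamiltonian tour is a cycle on $n$ vertices, each of whose edges has weight at least $1$ in the shortest-path metric). Let $p=|\mathcal{C}|$ be the number of cycles returned by {\sc{LongCycles}}. These cycles are vertex-disjoint and cover exactly $n-m$ vertices, so they contribute $n-m$ edges. Contracting each cycle while keeping the $m$ vertices of $B$ as singletons yields a connected minor on $p+m$ vertices (connected since $G$ is), a spanning tree of which has $p+m-1$ edges; duplicating these gives $2(p+m-1)$ edges. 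Together with the cycle edges this forms a connected, even-degree, hence Eulerian, subgraph, and short-cutting its Euler tour produces a Hamiltonian tour of cost at most the total edge count
\[
(n-m)+2(p+m-1)=n+m+2p-2.
\]

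Next I would eliminate $p$ using the length guarantee of {\sc{LongCycles}}. Since every cycle in $\mathcal{C}$ has length at least $d=2\sqrt{k}$ and the cycles cover $n-m$ vertices, we have $p\leq(n-m)/d\leq n/d$. Substituting, the tour cost is at most $n+m+2n/d$, and therefore the approximation ratio is at most
\[
\frac{n+m+2n/d}{n}=1+\frac{m}{n}+\frac{2}{d}.
\]

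Finally I would substitute the two quantitative inputs. The term $2/d=1/\sqrt{k}$ is already $O(1/\sqrt{k})$. For the other term, Lemma \ref{lem_ub_b} gives $m/n\leq\frac{k-2}{2(k-d+1)}=\frac{k-2}{2(\sqrt{k}-1)^2}$, since $k-d+1=k-2\sqrt{k}+1=(\sqrt{k}-1)^2$. Writing this as $\frac{1}{2}\left(1+\frac{2\sqrt{k}-3}{(\sqrt{k}-1)^2}\right)$ exhibits it as $\frac{1}{2}+O(1/\sqrt{k})$. Combining the two bounds yields an approximation ratio of $1+\frac{1}{2}+O(1/\sqrt{k})=\frac{3}{2}+O(1/\sqrt{k})$, as claimed.

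I expect the only genuine subtlety to lie in the edge-counting bookkeeping: verifying that the contracted graph is connected so the spanning tree exists, and confirming that attaching two copies of the tree to the cycle-and-$B$ structure restores even degree at every vertex once the contractions are undone (each $B$-vertex receives even degree purely from duplicated tree edges, while each cycle vertex has its degree-$2$ cycle contribution plus an even tree contribution). Everything past that is a short asymptotic estimate; the conceptual heavy lifting was already done in Lemma \ref{lem_ub_b}, whose content is precisely that the long cycles cover essentially half the vertices, forcing $m\approx n/2$ and thereby pinning down the leading constant $3/2$.
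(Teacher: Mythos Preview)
Your proof is correct and follows essentially the same route as the paper: bound the tour cost by $(n-m)+2(p+m-1)$, use $p\leq (n-m)/d$, and plug in Lemma~\ref{lem_ub_b} together with $d=2\sqrt{k}$ to extract the $\tfrac{3}{2}+O(1/\sqrt{k})$ ratio. The only cosmetic difference is that you relax $p\leq(n-m)/d$ to $p\leq n/d$ before substituting, whereas the paper carries the $-2m/d$ term through; this loses nothing asymptotically.
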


\begin{proof}
Since the vertex-disjoint cycles in $\mathcal{C}$ cover $n-m$ vertices, and each cycle contains at least $d$ vertices, the number of cycles in $\mathcal{C}$ is at most $(n-m)/d$, and hence, the number of components to be connected using a spanning tree is at most $(n-m)/d+m$. The TSP tour that the algorithm constructs consists of the cycles in $\mathcal{C}$, and two copies of a spanning tree in the graph obtained by contracting the cycles. The former contributes $n-m$ edges, while the latter contributes at most $2(n-m)/d+2m-2$ edges. Thus, the cost of the tour is at most
\begin{eqnarray*}
n-m+\frac{2(n-m)}{d}+2m-2 & = & n\left(1+\frac{2}{d}\right)+m\left(1-\frac{2}{d}\right)-2\\
 & \leq & n\left(1+\frac{2}{d}\right)+\frac{n(k-2)}{2(k-d+1)}\left(1-\frac{2}{d}\right)\\
 & \leq & n\left(1+\frac{2}{d}+\frac{k-2}{2(k-d+1)}\right)\\
 & = & n\left(\frac{3}{2}+\frac{2}{d}+\frac{d-3}{2(k-d+1)}\right)
\end{eqnarray*}
where we have used Lemma \ref{lem_ub_b} for the first inequality. For $d=\Theta\left(\sqrt{k}\right)$, the cost of the tour turns out to be $n\left(\frac{3}{2}+O\left(\frac{1}{\sqrt{k}}\right)\right)$. Thus, the algorithm achieves a $\frac{3}{2}+O\left(\frac{1}{\sqrt{k}}\right)$ factor approximation.
\end{proof}

\section{Concluding Remarks}


Both Vishnoi's algorithm as well as ours work only on regular graphs. Extending these to work on a larger class of graphs, with weaker assumptions about the vertex degrees, is an interesting problem, and will involve new techniques. We used the number of vertices as a lower bound on the cost of the optimal TSP tour. Extending to a larger class of graphs will require a tighter lower bound, and the cost of the Held-Karp relaxation is one candidate. Even for regular graphs, we do not know a hardness of approximation result, as a function of the degree $k$. Indeed improving the approximation factor to $1+c/k$ for some constant $c$ can not be ruled out. 

We would like to see whether our algorithm can be derandomized to get a $(1+o_k(1))$-approximation, possibly with some loss in the running time. We strongly feel that the following related avenues are worth exploring: first, to determine the best approximation ratio that can be achieved by deterministic algorithms for the {\sc{RegGraphTSP}}, and second, to determine the best approximation ratio that can be achieved by linear time deterministic algorithms. 

\section*{Acknowledgments}

The authors thank Nisheeth Vishnoi and Parikshit Gopalan for some initial discussions. The authors also thank Ayush Choure for his substantial contribution to this paper.

\bibliographystyle{plain}
\bibliography{references.bib}

\end{document}